\newtheorem{theorem}{Theorem}
\newcommand{\expect}[1]{\mathbb{E}\left[ #1 \right]}
\newlength{\dhatheight}
\newcommand{\doublehat}[1]{%
    \settoheight{\dhatheight}{\ensuremath{\hat{#1}}}%
    \addtolength{\dhatheight}{-0.35ex}%
    \hat{\vphantom{\rule{1pt}{\dhatheight}}%
    \smash{\hat{#1}}}}
\def\BibTeX{{\rm B\kern-.05em{\sc i\kern-.025em b}\kern-.08em
    T\kern-.1667em\lower.7ex\hbox{E}\kern-.125emX}}
\begin{document}

\title{Multi-User MABs with User Dependent Rewards for Uncoordinated Spectrum Access\\ 
}

\author{\IEEEauthorblockN{Akshayaa Magesh\IEEEauthorrefmark{1},
Venugopal V. Veeravalli \IEEEauthorrefmark{2} \thanks{This research was supported by the US NSF SpecEES under grant number 1730882, through the University of Illinois at Urbana-Champaign.}}
\IEEEauthorblockA{Department of Electrical and Computer Engineering \\
Coordinated Science Laboratory \\
University of Illinois at Urbana-Champaign\\
Emails: \IEEEauthorrefmark{1}amagesh2@illinois.edu,
\IEEEauthorrefmark{2}vvv@illinois.edu}}
\maketitle

\begin{abstract}
The uncoordinated spectrum access problem is studied using a multi-player multi-armed bandits framework. In the considered system model there is no central control and the users cannot communicate with each other. Furthermore,  the environment may appear differently to different users, \textit{i.e.}, the mean rewards as seen by different users for a particular channel  may  be  different. With this setup, we present a policy that achieves expected regret of $O (\log{T})$ over a time horizon of duration $T$.  

\end{abstract}

\section{Introduction}

Current spectrum management protocols treat frequency spectrum as a fixed commodity, which leads to spectrum under utilization. Dynamic spectrum access techniques have emerged as good strategies to improve spectrum utilisation. Existing techniques for dynamic spectrum access have focused primarily on the primary/secondary user paradigm, where secondary users detect vacant bandwidths when available and vacate the occupied channel when a primary user wants to transmit.

In this work, we consider the uncoordinated spectrum access model, where there is no such hierarchy among users and users are not allowed to communicate with each other. The users follow a common protocol designed to maximize the system performance rather than an individual's reward. This problem is studied using the stochastic multi-user multi-armed bandit framework. The channels are treated as the arms and the channel gains or the rates received from the channels can be interpreted as the rewards.  

In most of the previous work in this area (eg. \cite{mega}, \cite{mc}, \cite{vvv} and \cite{perchet}), it is assumed that the reward distributions for each channel is the same across all users. In \cite{mega}, \cite{mc} and \cite{perchet}, the assumption is that, in the case of a collision (when more than one user access a channel) all the colliding users receive zero reward, which is the assumption we work with in this paper. The algorithm in \cite{mega} combines an $\epsilon$-greedy approach with a collision avoiding mechanism and achieves expected regret of $O (T^{\frac{2}{3}})$. The musical chairs algorithm in \cite{mc} has an exploration phase where the users estimate the mean rewards for the channels and the number of users, and the users occupy one of the best arms for the remainder of the time horizon. In \cite{vvv}, a model is studied in which more than one user can access a channel simultaneously and receive non-zero rewards and an approach similar to the musical chairs algorithm \cite{mc} is presented. The policies provided in \cite{mc} and \cite{vvv} provide guarantees of constant regret with high probability. The algorithm presented in \cite{perchet} achieves a regret of $O(\log{T})$. The lower bound for the single user case is $O(\log{T})$, and since we are considering a scenario without communication between the users, we expect the regret lower bound to be $O(\log{T})$ at best, which we show is achieved by the policy presented in this paper.

Given that in a practical scenario, users are not colocated in a wireless network, assuming that different users have different reward distributions for the channels results in a more realistic model. There has been some work covering varying reward distributions across users (eg. \cite{dileep}, \cite{hanawal}, \cite{got} , \cite{kaufmann} and \cite{magesh}). The algorithms presented in \cite{dileep} and \cite{hanawal} consider such a model, with an assumption that the players can sense what happens on a channel, \textit{i.e.}, if someone is using the channel or not, or if there is a collision on it. However, such an assumption might be unrealistic for the uncoordinated spectrum access problem. In our work, we consider a fully distributed scenario where players only have access to their previous actions and rewards. The work in \cite{got} considers such a fully distributed setting, and the proposed algorithm achieves a regret of $O(\log^2{T})$. They consider a game-theoretic approach to solve for the pareto-optimal matching that maximizes the system welfare. The work in \cite{magesh} extends this game-theoretic approach to the case of non-zero rewards on collisions. 

The work in \cite{perchet} introduces the idea of using forced collisions as a way to communicate among the users in the homogeneous setting where the reward distributions for the channels are the same across users. In this work, we use the idea of forced collisions in the heterogeneous case. The algorithm presented here was developed independently of the work in \cite{kaufmann}, where an approach similar to ours is explored, \textit{i.e.}, a fully distributed setting with user dependent rewards and with zero reward on collision. However, while the algorithm presented in \cite{kaufmann} achieves logarithmic regret only in the case of a unique optimal matching, the policy presented in this paper results in logarithmic regret even in the case of multiple optimal matchings.  

\section{System Model}

We consider the scenario of a multi-user game involving $K$ users and $M$ channels as the arms in a stochastic multi-armed bandit setup. We assume that $K < M$ and that the rewards for the channels are bounded in $[0,1]$. Let the mean reward for user $j$ on channel $m$ be denoted by $\mu_{j}(m)$. Consider a time horizon $T$, and let the action taken by user (arm chosen by the user) $j$ at time $t \leq T$ be $a_{t,j}$. In the case of a collision, \textit{i.e.}, when multiple users access the same channel, all the colliding users receive zero reward.

Let $\mathcal{A} (K,M)$ denote all the possible user channel matchings, \textit{i.e.}, $\mathbf{a} = [a_1,a_2,...,a_K] \in \mathcal{A} (K,M)$, with $a_j$ denoting the action taken by user $j$. Since we assume that colliding users get zero rewards, we only consider matchings that are unique, \textit{i.e.}, once for which all the users are assigned to distinct arms. Let $\mathbf{a^*} \in \mathcal{A} (K,M)$ be such that $$ \mathbf{a^*} \in \mathop{\arg\max}_{\mathbf{a} \in \mathcal{A}(K,M)} \sum_{j=1}^K \mu_{j}(a_j).$$ Define $J_1$ to be the system reward for the optimal matching, \textit{i.e.}, $J_1 = \sum_{j=1}^K \mu_{j}(a^*_j)$, and $J_2$ to be the system reward for the second optimal matching. In our algorithm, we assume that we have access to a lower bound on the parameter $\Delta$ defined as follows (see also \cite{dileep}, \cite{got}): $$ \Delta = \frac{J_1 - J_2}{2M}$$ Note that this quantity is strictly positive even in the case of multiple optimal matchings. 

\noindent
The expected regret of the system is defined as $$ R(T) = T \sum_{j=1}^K \mu_{j}(a^*_j) - \expect{\sum_{t=1}^T \sum_{j=1}^K \mu_{j}(a_{t,j})} $$
where the expectation is over the actions of the players. 

\section{Algorithm}

We assume that the players are time synchronised and that they enter the system at $t=0$. The algorithm proceeds in epochs for each user. This allows us to proceed without knowing the time horizon $T$. Each epoch has three phases. 

The first phase is the exploration phase, which has two parts. The first part is the fixing phase and is done for each user to obtain a unique ID. Each user accesses the arms uniformly and once a free arm is found, \textit{i.e.}, non-zero reward is received, that arm is played for the rest of the fixing phase. The channel numbers they settle on serve as their IDs. At the end of the fixing phase, the users that are not fixed occupy channel 1, and the fixed users access channel 1 in order of their IDs sequentially. If the fixed users do not face a collision during this step, all users have obtained unique IDs. Once all the users obtain unique IDs, say at epoch $\ell_f$, this part of the exploration phase is no longer done from epoch $\ell_f + 1$. The second part of the exploration phase is for the users to get estimates of the mean rewards of the arms. The users start from the channels corresponding to their IDs, and sample each arm for $\gamma$ time units in a round-robin fashion. 

The second phase is the matching phase and its purpose is for the users to arrive at the optimal matching. The first part of the matching phase is for each user to communicate their estimates of the mean rewards of all the channels to the other users. The users transmit the estimated mean rewards $\hat{\mu}_{j}(m)$ for all the channels in the order of their IDs. Since the players are not allowed to directly communicate with each other, collisions are used as a way to exchange information among players. The use of forced collisions as a form of communication was introduced in \cite{perchet}. The main idea is that there are $M$ channels available to the users and the transmitting user $j$ occupies a certain channel. When the receiving users access the channels one at a time, the channel number on which they face a collision gives them information about what was being transmitted. The value of the estimate $\hat{\mu}_{j}(m)$ that each user has received at the end of the matching phase is denoted by $\doublehat{\mu}_{j}(m)$. Note that this is similar to truncating the value of $\hat{\mu}_{j}(m)$ to a finite number of bits as $\doublehat{\mu}_{j}(m)$. At the end of the first part of the matching phase, $|\hat{\mu}_{j}(m) - \doublehat{\mu}_{j}(m)| \leq \Delta/2$ for $j \in [K]$ and $m \in [M]$ given that all the users have a unique ID. 

Once this communication part of the matching phase is completed, each user has the same approximated values of estimated mean rewards. Each user then independently solves for the set of optimal matchings from the matrix of $\doublehat{\mu}_{j}(m)$ values. If there is a unique optimal matching, the users play the arm according to that matching for the exploitation phase. If there are multiple optimal matchings, it is necessary for each user to choose the same optimal matching from this set. This is achieved in the following manner. The user with ID number one chooses one of the optimal matchings and occupies that channel. The remaining users access the $M$ channels in order of their IDs to get the channel number chosen by the user with ID one and update the set of optimal matchings that correspond with the channel chosen by user one. This process is repeated until all the users settle on the same optimal matching. This matching is played by all the users for the exploitation phase.

In our algorithm, the estimated mean rewards of all the channels are communicated to each user through the idea of forced collisions, and hence all the users can independently solve the assignment problem to arrive at the same set of optimal matchings. However, in \cite{kaufmann}, the communication mechanism is adapted from \cite{perchet}, where a leader-follower protocol is employed. The followers send the values of estimated mean rewards to the leader, and the leader computes the matching that has to be played by the users. While the algorithm presented here gives guarantees of logarithmic regret for the cases of unique optimal matching and multiple optimal matchings, the work in \cite{kaufmann} provides guarantees of logarithmic regret only for unique optimal matchings and quasi-logarithmic regret in the case of multiple optimal matchings. Note that the constants in the upper bound for average regret of our algorithm are comparable to the ones obtained in \cite{kaufmann}.

\begin{algorithm}
\SetAlgoLined
 \textbf{Initialization}: Set $\hat{\mu}_{j}(m) = 0$ for all values of $j \in [K]$ and all $m \in [M]$ and $L_T$ as the last epoch with time horizon T.\\
 \For{$\ell = 1, ..., L_T$}{
  
      \textbf{Exploration phase}: 
      
          Fixing phase : Access channels uniformly for $T_f$ time units till find a channel with no collision; fix on that for remainder of sub-phase. The channel number fixed on serves as unique ID. \\
          If not fixed during fixing phase, send a flag by occupying channel 1.  \\
          If fixed during part fixing phase, access channel 1 in order of ID. Once all users are fixed, skip fixing phase. \\
          Access each channel in a round robin fashion for $\gamma$ time units to get estimates $\hat{\mu}_{j}(m)$.\\
      \textbf{Matching phase}: Enter the matching algorithm to convey the estimates to all users, receive their estimates and calculate the optimal matching.  \\
      \textbf{Exploitation phase}: Occupy the channel resulting from the matching algorithm for $2^\ell$ time units.\\

 }
  \caption{Decentralized MUMAB Algorithm}
\end{algorithm}

\begin{algorithm}
\SetAlgoLined
\textbf{Initialization}: Transmit in order of ID.  

\For{Transmitting user $j' = 1, ..., K$}{

     If turn to transmit, transmit $\hat{\mu}_{j'}(m)$ for $m = 1, ..., M$ as:
     
         Occupy channel $h_1 = \lceil M \hat{\mu}_{j'}(m) \rceil$ for $M$ time units. \\
         Occupy $h_r = \lceil M^r (\hat{\mu}_{j'}(m) - \sum_{n = 1}^{r-1} \frac{h_n - 1}{M^n}) \rceil$ for M time units for each subsequent round $r$ for $\frac{1}{\log {M}} \log{(\frac{1}{\Delta})}$ rounds. \\
     
     If not turn to transmit, in order of IDs:
     
         Access channels in round robin fashion and initialize the estimates of the transmitted value as $ \doublehat{\mu}_{j'}(m) = \frac{2h-1}{2M}$ in the first round. \\
         Update in the subsequent round $r$ as $ \doublehat{\mu}_{j'}(m) = \sum_{n=1}^{r-1}\frac{h_n -1}{M^n} + \frac{2h_r-1}{2M^r}$.

}

Calculate the set $S_M$ of optimal matchings with values of estimates.

\eIf{$S_M$ is a singleton set}{
Assign channel according to the optimal matching.
}{
User 1 chooses one of the optimal matchings.\\
Remaining users update their set of optimal matchings accordingly.\\
Similar mechanism for subsequent users allows users to settle on one of the optimal matchings.
}
  
 \caption{Matching Algorithm for user $i$}
\end{algorithm}

\section{Regret Analysis}

\begin{theorem}
    Assuming the rewards of each channel for all users are bounded in $ [0,1] $ and i.i.d. for all $t \leq T$ for a time horizon $T$ and $\Delta = \frac{J_1 - J_2}{2M}$ is known, the regret of the decentralized MUMAB algorithm is $O(\log{T})$
\end{theorem}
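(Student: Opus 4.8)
The plan is to split the regret epoch by epoch into the fixing, estimation, matching, and exploitation contributions, and to show that the first three sum to $O(\log T)$ while the exploitation contribution is $O(1)$. Since the exploitation phase of epoch $\ell$ lasts $2^\ell$ slots, the total exploitation time $\sum_{\ell=1}^{L_T} 2^\ell$ cannot exceed $T$, so the number of epochs obeys $L_T = O(\log T)$, and every phase is entered at most $L_T$ times.

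For the overhead phases I would argue that each has length constant in $T$: the fixing phase is at most $T_f$ slots, the round-robin estimation phase is $M\gamma$ slots, and the matching phase lasts $O\!\left(K M^2 \log(1/\Delta)/\log M\right)$ slots, since each of the $K$ users transmits $M$ estimates over $\frac{1}{\log M}\log(1/\Delta)$ rounds of length $M$. Bounding the per-slot regret by $J_1$ and multiplying by these lengths and by $L_T$ epochs gives an $O(\log T)$ bound for the estimation and matching phases. The fixing phase terminates permanently once all IDs are distinct; as this succeeds each epoch with constant probability, its expected number of repetitions is $O(1)$, so the fixing contribution is in fact $O(1)$.

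The heart of the argument is the exploitation phase. I would define the good event $G_\ell$ that by epoch $\ell$ fixing has succeeded (all users hold distinct IDs) and every empirical mean satisfies $|\hat{\mu}_j(m) - \mu_j(m)| \leq \Delta/2$ for all $j \in [K]$, $m \in [M]$. On $G_\ell$, combining with the communication guarantee $|\hat{\mu}_j(m) - \doublehat{\mu}_j(m)| \leq \Delta/2$ established for the matching phase yields $|\doublehat{\mu}_j(m) - \mu_j(m)| \leq \Delta$, so the estimated value of any matching differs from its true value by at most $K\Delta$. Since $J_1 - J_2 = 2M\Delta > 2K\Delta$ by the assumption $K < M$, the true optimal matching strictly dominates every suboptimal matching under the common $\doublehat{\mu}$ matrix; as all users hold identical $\doublehat{\mu}$ values and run the same deterministic tie-breaking rule, they settle on a common optimal matching and the epoch incurs zero exploitation regret.

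It remains to bound $P(G_\ell^c)$ and sum. Taking the empirical means to accumulate across epochs, so that each channel carries $\ell\gamma$ clean samples by epoch $\ell$, Hoeffding's inequality and a union bound over the $KM$ pairs give $P(G_\ell^c) \leq 2KM\exp(-\ell\gamma\Delta^2/2)$ plus a geometrically decaying term for incomplete fixing. The exploitation regret is therefore at most $J_1 \sum_{\ell} 2^\ell P(G_\ell^c) \leq 2KM J_1 \sum_{\ell} \left(2 e^{-\gamma\Delta^2/2}\right)^\ell$, which converges to $O(1)$ whenever $\gamma > 2\ln 2/\Delta^2$. The main obstacle is exactly this tension: the exploitation length doubles each epoch, so the probability of committing to a wrong matching must decay strictly faster than $2^{-\ell}$. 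Securing that decay forces the samples to accumulate (so concentration sharpens linearly in $\ell$) and the per-epoch budget $\gamma$ to be chosen large relative to the known gap $\Delta$; the hypothesis $K < M$ is precisely what keeps the aggregate estimation-plus-communication error below half the reward gap and lets the identification step go through.
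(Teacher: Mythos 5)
Your proposal is correct and follows essentially the same route as the paper: the same decomposition into exploration, matching, and exploitation contributions with $L_T = O(\log T)$ epochs of constant-length overhead, and the same key argument that the exploitation regret is $O(1)$ because the probability of committing to a wrong matching (IDs not fixed, or $|\doublehat{\mu}_{j}(m)-\mu_{j}(m)|>\Delta$) decays geometrically faster than the $2^\ell$ growth of the exploitation length, thanks to samples accumulating across epochs. The only substantive difference is quantitative: since you split the error budget as $\Delta/2$ for estimation plus $\Delta/2$ for communication, your Hoeffding exponent forces $\gamma > 2\ln 2/\Delta^2$, whereas the paper sets $\gamma = \frac{1}{2\Delta^2}$ and applies Hoeffding to the full deviation $\Delta$; your accounting is arguably the more careful of the two.
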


\begin{proof}

The regret incurred during the $L_T$ epochs can be analyzed as the sum of the regrets incurred in the three stages of the algorithm. From the structure of the epochs, we have that $ L_T < \log{T} $. 

\begin{enumerate}
    \item Exploration phase: Let the regret incurred during the exploration phase for all epochs be $R_1$. The exploration goes on for $T_f + K + \gamma M$ time units till epoch $\ell_f$ (when all the users get fixed) and for $\gamma M$ time units after that. Choosing $T_f = M\log (20K)$ and $\gamma = \frac{1}{2\Delta^2}$, we have that 
    \begin{equation}
    \begin{split}
        R_1 &\leq \sum_{\ell = 1}^{L_T}\left( M\left( \frac{1}{2\Delta^2} + 1 +\log {20K}\right)  \right) \\
             &\leq \left( M \left( \frac{1}{2\Delta^2} + 1 +\log{20K} \right) \right) \log{T}.
    \end{split}
    \end{equation}
    
    \item Matching phase: Let the regret incurred during the matching phase be $R_2$. The matching phase runs for $\frac{KM^3}{\log M} \log{\frac{1}{\Delta}} + (M-k)M^2 + M^3$ when there are multiple optimal matchings and for $\frac{KM^3}{\log M} \log{\frac{1}{\Delta}} + (M-k)M^2$ time units when there is an unique optimal matching. Thus  
    \begin{equation}
    \begin{split}
         R_2 &\leq \sum_{\ell = 1}^{L_T} \frac{KM^3}{\log M} \log{\frac{1}{\Delta}} + (M-k)M^2 + M^3 \\
        &\leq \left( \frac{K}{\log M} \log{\frac{1}{\Delta}} + 2 \right) M^3 \log T.
    \end{split}
    \end{equation}
    
    \item Exploitation phase: Let $R_3$ denote the regret incurred during the exploitation phase.  
    Regret is incurred in the exploitation phase in epoch $\ell$ only in case of the following two events: 
    
    \begin{enumerate}
        \item The users do not get a unique ID in the first part of the exploration phase. Let $P_\ell(A)$ denote the probability that after the exploration phase of epoch $\ell$ the users do not have a unique ID  
        \item Given that users have unique IDs, for some $j \in [k] $ and some $m \in [M]$, $|\doublehat{\mu}_{j}(m) - \mu_{j}(m)| > \Delta$. Let $P_\ell(B)$ denote the probability of this event. 
    \end{enumerate}
    
    Thus we have that 
    
    \begin{equation}
        R_3 = \sum_{\ell = 1}^{L_T} 2^\ell(P_\ell(A) + P_\ell(B)).
    \end{equation}
    
    Let $p_f$ denote the probability that with the fixing phase running for $T_f$ time units, all users are fixed. From Lemma 1 of \cite{perchet}, we have that $p_f \geq (1 - K e^{\frac{T_f}{M}})$. From the definition of the event A, we have that 
    
    \begin{equation}
        P_\ell(A) = (1-p_f)^\ell
    \end{equation}
    
    Our choice of $T_f$ results in $p_f > \frac{3(e-1)}{2e}$ and thus, 
    \begin{equation}
        \sum_{\ell = 1}^{L_T} 2^\ell P_\ell(A) = \sum_{\ell = 1}^{L_T} 2^\ell (1-p_f)^\ell \leq \frac{e}{2e-3}.
    \end{equation}
    
    We therefore get the first term of $R_3$ to be bounded by a finite number. 
    
    Let $F$ denote the event that in epoch $\ell$ all the users have unique IDs. This means that in some epoch $\ell_f \in {0,1,...,\ell -1}$, the system was fixed. Thus 
    
    \begin{equation}
    \begin{split}
        &P_\ell(B)\\
        &= P(|\doublehat{\mu}_{j}(m) - \mu_{j}(m)| > \Delta | F) \\
        &= \sum_{i=1}^{\ell -1} P(|\doublehat{\mu}_{j}(m) - \mu_{j}(m)| > \Delta | F, \ell_f =i) P(\ell_f = i) \\
        &\leq \sum_{i=1}^{\ell-1} 2e^{-2 \Delta^2 \gamma (\ell -i)} p_f (1-p_f)^{i-1} \\
        &= 2e^{-2 \Delta^2 \gamma \ell} \sum_{i=1}^{\ell-1} e^{2\Delta^2\gamma i} p_f (1-p_f)^{i-1}.
    \end{split}
    \end{equation}
        
    Choosing $T_f = M\log (20K)$ gives $p_f > \frac{3(e-1)}{2e}$, and using $\gamma = \frac{1}{2 \Delta^2}$ yields
    
    \begin{equation}
        P_\ell(B) \leq \frac{4e}{e-1} e^{-\ell}
    \end{equation}
    
    and 
    
    \begin{equation}
    \begin{split}
        \sum_{\ell=1}^{L_T} 2^\ell P_\ell(B) &\leq \frac{4e}{e-1} \sum_{\ell=1}^{L_T}  \left( \frac{2}{e} \right) ^{-\ell} \\ 
        &\leq \frac{8e}{(e-1)(e-2)}.
    \end{split}
    \end{equation}
     
    Thus 
    
    \begin{equation}
        R_3 \leq C = \frac{e}{2e-3} + \frac{8e}{(e-1)(e-2)}.
    \end{equation}
    
\end{enumerate}

Therefore

\begin{equation}
\begin{split}
    R(T) &= R_1 + R_2 + R_3 \\
    &\leq \left( \frac{M}{2\Delta^2} + \frac{KM^3}{\log M} \log{\frac{1}{\Delta}} + 4M^3\right) \log T + C \\
    &= O(\log{T}).
\end{split}
\end{equation}

\end{proof}

\section{Experimental Results}

In this section, we present experimental results to validate the performance of our algorithm. We applied the proposed algorithm to a system with $K = 10$ users and $M = 10$ channels. Figure 1 shows the plot of average accumulated regret across the time horizon. The algorithm was run for 10 epochs. We see from the figure that the average accumulated regret grows sub-linearly with time and the regret is bounded by $\log{T}$. 

\begin{figure}
    \centering
    \includegraphics[scale = 0.6]{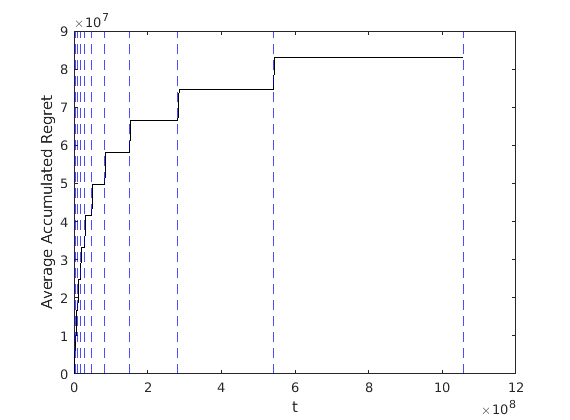}
    \label{fig:my_label}
\end{figure}

\section{Conclusion}

In this work, we have studied the uncoordinated spectrum access problem using a multi-player multi-armed bandit framework where there is no central control and users cannot communicate with each other. We have considered the case where the mean rewards as seen by different users for a particular channel may be different. Under this setup, we have presented an algorithm that provides a regret of  $O (\log{T})$. It is of interest to extend the results presented here to the case where the users receive non-zero rewards on collisions. See \cite{magesh} for an initial study along these lines.

\bibliographystyle{IEEEtran}
\bibliography{refs}

\vspace{12pt}

\end{document}